\newcommand{\pushright}[1]{\ifmeasuring@#1\else\omit\hfill$\displaystyle#1$\fi\ignorespaces}
\let\NAT@parse\undefined
\DeclareMathOperator*\uplim{\overline{lim}}
\DeclareMathOperator*{\argmin}{arg\,min}
\newacronym{ediss}{\emph{E-$\delta$-ISS}}{exponentially incrementally input-to-state stable}
\newtheorem{theorem}{Theorem}
\newtheorem{lemma}{Lemma}
\newtheorem{assumption}{Assumption}
\newtheorem{definition}{Definition}
\newcommand{\ak}[1]{{\color{black}  {#1}}}
\title{\LARGE \bf
On the Regret of Recursive Methods for\\
Discrete-Time Adaptive Control with Matched Uncertainty
}
\author{Aren Karapetyan, Efe C. Balta, Anastasios Tsiamis, Andrea Iannelli, and John Lygeros%
\thanks{This work has been supported by the Swiss National Science Foundation under NCCR Automation (grant agreement $51\text{NF}40\_180545$), the  European Research Council under the ERC Advanced grant agreement  $787845$ (OCAL) and by the German Research Foundation (DFG) under Germany’s Excellence
Strategy - EXC 2075 – 390740016.
}%
\thanks{A. Karapetyan, A. Tsiamis and J. Lygeros are with the Automatic Control Laboratory, Swiss Federal Institute of Technology (ETH Z\"urich), 8092 Z\"urich, Switzerland (E-mails: {\tt\footnotesize {akarapetyan, atsiamis, jlygeros}@ethz.ch).}}%
\thanks{E. C. Balta is with the Control and Automation Group, inspire AG, 8005 Zürich, Switzerland, and with the Automatic Control Laboratory {\tt\footnotesize efe.balta@inspire.ch}.}
\thanks{A. Iannelli is with the Institute for Systems Theory and Automatic Control, University of Stuttgart, Stuttgart 70569, Germany (E-mail: {\tt\footnotesize andrea.iannelli@ist.uni-stuttgart.de}).}%
}
\begin{document}

\maketitle
\thispagestyle{empty}
\pagestyle{empty}

\begin{abstract}

Continuous-time adaptive controllers for systems with a matched uncertainty often comprise  an online parameter estimator and a corresponding parameterized controller to cancel the uncertainty. However, such methods are often \ak{impossible to implement directly}, as they depend on an unobserved estimation error. \ak{We consider the equivalent discrete-time setting with a causal information structure, and propose a novel, online proximal point method-based adaptive controller, that under a sufficient excitation (SE) condition is asymptotically stable and achieves finite regret, scaling only with the time required to fulfill the SE}. We show the same also for the widely-used recursive least squares with exponential forgetting controller under a \ak{stronger persistence of excitation condition}.

\end{abstract}

\section{Introduction}

Adaptive control studies controllers that can adapt to or learn unmodelled changes in the dynamics \cite{aastrom2013adaptive}\ak{, while often guaranteeing perfect asymptotic tracking and/or parameter convergence  \cite{astolfi2008nonlinear, narendra2012stable}}. \ak{ The theory of online optimization was established concurrently with the developments in adaptive control, with the aim to} minimize an a priori unknown, sequentially revealed cost \cite{cesa2006prediction}. The connections between the two have recently been highlighted \cite{raginsky2010online, annaswamy2021historical}, showing that adaptive controllers are closely linked to an online cost optimization problem and are often proportional to an online estimation cost gradient. In this work, we consider \ak{such} controllers that aim to minimize a cost function while controlling the system\ak{, similar to the goal-oriented control formulation in \cite{fradkov1999nonlinear}}. 
\ak{Since} asymptotic stability alone does not automatically \ak{imply} performance guarantees on the cost \cite{nonhoff2023relation, karapetyan2023implications}\ak{, we also} provide finite-time guarantees by characterizing the \emph{regret} \cite{cesa2006prediction} of \ak{the} controller: a concept borrowed from online learning to quantify the additional cost incurred by the controller over a horizon due to partial model knowledge.

We consider systems with matched uncertainty \cite{astolfi2008nonlinear}, where the unknown dynamics is parameterized by an unknown parameter vector and a known, state-dependant feature matrix. In this setting, the uncertainty can be exactly canceled by the control input given knowledge of the true parameter. 
We aim to find causal adaptive controllers that estimate this parameter recursively while stabilizing the system and minimizing regret. 
Most continuous-time adaptive controllers in this setting utilize Lyapunov theory and provide only asymptotic stability guarantees, see for example \cite{aastrom2013adaptive, astolfi2008nonlinear, narendra2012stable}. Using the same tools, recently, a finite $\mathcal{O}(1)$ regret bound  \ak{was shown in \cite{gaudio2019connections} for linear systems, and  in \cite{boffi2021regret} for nonlinear time-varying systems}. 
In \cite{boffi2021regret}, the authors also extend the results to a discrete-time setting, achieving a $o(T)$,  sublinear-in-time, bound with a causal formulation, and a $\mathcal{O}(1)$ with a non-causal one. In contrast to our proposed method, the Lyapunov-based adaptive controllers in \cite{gaudio2019connections, boffi2021regret} are not realizable in discrete time, as the parameter update depends on the current, unobserved error, making it non-causal. The online gradient descent-based algorithm in \cite{boffi2021regret}, although causal, achieves ${o}(T)$ regret even in the deterministic case. The difficulty is inherent in the discrete-time formulation of the problem, which introduces a one-step delay in the error observation. In discrete time, the convenient Lyapunov function cancellation \cite{astolfi2008nonlinear} no longer holds and discrete-time Lyapunov functions are
known to be harder to construct \cite{akhtar2004logarithmic}. 

The recursive learning algorithm for the adaptive controller can be set up to relate to an online or ``running" cost. This setting is \ak{also} studied in the time-varying optimization literature providing asymptotic tracking and convergence guarantees for online costs, e.g., \cite{simonetto2016convergence, dall2019convergence}. However, in this line of work, the problem setup considers no underlying dynamics for the state, and the regret \ak{is generally} not studied.

We design causal, uncertainty-matching, recursive methods for discrete-time, nonlinear, time-varying systems.  Our contributions are threefold. \ak{Firstly}, we introduce a recursive proximal learning (RPL) parameter estimation algorithm based on the proximal point method \cite{parikh2014proximal}. Under a weak notion of persistence of excitation (PE)\ak{, which we refer to as sufficient excitation (SE) as in the continuous-time adaptive control litarature \cite{pan2023comparative},} RPL produces estimates of the unknown parameters that are contractive with respect to the true one. \ak{Secondly}, we show that RPL-based adaptive control achieves finite regret scaling with the time required to achieve the \ak{SE} condition. By contrast, in \cite{boffi2021regret} finite regret is only achieved through access to a Lyapunov function or by a non-causal controller.
Finally, we analyze the regret of the recursive least-squares with forgetting factor (RLSFF) \cite{johnstone1982exponential}. We show that while both RPL and RLSFF achieve finite regret with similar bounds\ak{, RLSFF requires a stronger PE condition.} We demonstrate the algorithms on a discrete-time model reference adaptive control (MRAC) \cite{dogan2020improving} example.

\textit{Notation}: The sets of positive real numbers, positive integers, and non-negative integers are denoted by $\mathbb{R}_{+}$, $\mathbb{N}_{+}$ and $\mathbb{N}$, respectively. For a given vector $x$, its  Euclidean norm is denoted by $\|x\|$.  The spectral norm of a square matrix $W$ is denoted by $\|W\|$, and the largest and smallest singular values and eigenvalues by $\sigma_{min}(W)$, $\sigma_{max}(W)$, and $\lambda_{min}(W)$, $\lambda_{max}(W)$ respectively. For a positive scalar $r\in \mathbb{R}_+$, we define the closed ball around the origin as $\mathcal{B}_r:=\{x\in \mathbb{R}^n \;| \;\|x\|\leq r\}$.

\section{Problem Statement}
\label{sec:problem_statement}

We consider nonlinear dynamical systems with matched uncertainty of the form
\begin{equation}
    \label{eq:nonlinear_matched}
    x_{k+1} = f_k(x_k) + B_k(x_k)\left(u_k - \ak{\alpha_k}(x_k)\right),
\end{equation}
where $f: \mathbb{N} \times \mathbb{R}^n  \rightarrow \mathbb{R}^n$ models the known nominal dynamics and satisfies $f_k({0}) = {0}$ for all $k\in \mathbb{N}$, $B_k: \mathbb{N}\times\mathbb{R}^n \rightarrow \mathbb{R}^{n\times m}$ is a known input matrix,  and\ak{ $\alpha_k:  \mathbb{N}\times\mathbb{R}^n \rightarrow \mathbb{R}^m$} is an \emph{unknown} mapping to the input space, capturing the uncertain dynamics of the model. We assume that  $\alpha_k$ can be exactly parameterized by possibly nonlinear, but known and bounded basis matrices, \ak{$\phi_k:  \mathbb{N}\times\mathbb{R}^n \rightarrow \mathbb{R}^{p\times m}$}, that map the state to a feature in $\mathbb{R}^{p\times m}$. The uncertainty is then given by $\alpha_k(x_k) = \ak{\phi_k^\top}(x_k)\theta^\star$, where $\theta^\star \in \mathbb{R}^{p}$ is an unknown, \ak{constant} parameter vector\footnote{We drop the explicit state dependence on state for $B_k:=B_k(x_k)$ and $\phi_k:= \ak{\phi_k}(x_k)$, wherever required for readability.}. We also assume $\|B_k(x_k)\ak{\phi_k^\top}(x_k)\| \leq b$ for all $k\in \mathbb{N}$, $x \in \mathbb{R}^n$ and some  $b \in \mathbb{R}_+$. The tracking problem under partial model knowledge can also be cast into this framework in the context of MRAC, as shown in \cref{sec:numerical}.

We characterize the performance of a given controller for system \eqref{eq:nonlinear_matched} over a time horizon $T$ by comparing it to a benchmark control signal $u^\star=\begin{bmatrix} 
    {u^\star_0}^\top & \hdots & {u^\star_{T-1}}^\top
\end{bmatrix}^\top$ that cancels the uncertainty perfectly and is defined below. We use regret to quantify the controller performance
\begin{equation}
    \label{eq:regret}
    \mathcal{R}_T({u}) = \sum_{k=0}^{T-1} c(x_k)- c(x^\star_k),
\end{equation}
where $u=\begin{bmatrix} 
    u_0^\top & \hdots & u_{T-1}^\top
\end{bmatrix}^\top$ denotes the control signal of a given controller, $x_k$ and $x^\star_k$ are state sequences due to the control signals $u$ and $u^\star$, respectively, and $c:\mathbb{R}^n \rightarrow \mathbb{R}$ is a stage cost.  We restrict our analysis to locally Lipschitz continuous costs and assume for any $R\in \mathbb{R}_+$, there exists  a $L_c \in \mathbb{R}_+$, such that for all $x,y \in \mathcal{B}_R$, 
    \begin{equation}
    \label{eq:lipschitz_cost}
        \|c(x) - c(y)\| \leq L_c \|x-y\|.
    \end{equation}

\begin{assumption}
\label{assum:system}
(System Dynamics)
\label{assum:ediss}
    The  nominal dynamics $x_{k+1} = f_k(x_k)$  are uniformly, globally \acrshort{ediss} \cite{angeli2002lyapunov}. In other words, there exist $c_0,c_w, r_w \in \mathbb{R}_+$ and $\rho \in (0,1)$, such that for any $x_0,y_0 \in\mathbb{R}^n$, $w_k \in \mathcal{B}_{r_w}$, and all $k \in \mathbb{N}$ the perturbed dynamics $y_{k+1} = f_k(y_k)+w_k$ satisfy
    \begin{equation*}
        \|x_k-y_k\| \leq c_0 \rho^k\|x_0-y_0\|+c_w\sum_{i=0}^{k-1}\rho^{k-i-1}\|w_i\|.
    \end{equation*}
 
\end{assumption}

The \acrshort{ediss} property characterizes only the known and autonomous part of the dynamics. It is known that if $f_k$ is smooth in the state and is contractive, then it is also \acrshort{ediss} \cite{tran2018convergence,boffi2021regret}. If $f_k$ is not smooth, then under milder conditions, we show in \cite{karapetyan2023closed} that exponential stability implies \acrshort{ediss}. %

\subsection{Online Estimation} We consider adaptive control laws that produce control inputs of the form $u_k =\ak{\phi_k^\top}(x_k)\theta_k$
where $\theta_k \in \mathbb{R}^{p}$ is an online estimate of $\theta^\star$ at time $k$. The closed-loop dynamics for a controller of this form is given by
\begin{equation}
    \label{eq:suboptimal}
    x_{k+1} = f_k(x_k) + \underbrace{B_k(x_k)\ak{\phi_k^\top}(x_k)\left(\theta_k-\theta^\star\right)}_{w_k(x_k)},
\end{equation}
where $w_k(x_k)$ can be thought of as a state-dependant artificial disturbance introduced due to an inexact uncertainty matching. One can also consider a benchmark counterfactual control input $u_k^\star: = \ak{\phi_k^\top}(x_k^{\star}) \theta^\star$ and the corresponding signal  $u^\star$ that perfectly cancels out the uncertainty in \eqref{eq:nonlinear_matched} leading to a benchmark state evolution 
\begin{equation}
\label{eq:optimal}
    x_{k+1}^\star = f_k(x_k^\star).
\end{equation}
Note that $u^\star$ is a counterfactual policy that is not realizable, as $\theta^\star$ is unknown. To simplify the analysis, we take $x_0=x_0^\star$. 

A standard setup in adaptive control design is to produce estimates $\theta_k$ that minimize the magnitude of $w_k(x_k)$, effectively steering the state of \eqref{eq:suboptimal} to that of \eqref{eq:optimal} thanks to the \acrshort{ediss} property. This is often done by setting up a least squares estimation problem, minimizing the following online estimation cost for all $ k \in \mathbb{N}_+$
\begin{align}
\label{eq:nominal_cost}
    \theta_k &= \argmin_\theta h_{k-1}(\theta), \\
    h_{k-1}(\theta): \!&= \!  \frac{1}{2}\sum_{i=0}^{k-1}\|B_i\phi_i^\top\left(\theta-\theta^\star\right)\|^2 = \frac{1}{2}\|\Phi_{k}{\theta} -Y_{k}\|^2,\notag \\
    \nonumber
        Y_k &= \begin{bmatrix}
        y_0 \\
        y_1 \\
        \vdots \\
        y_{k-1}
    \end{bmatrix}, \qquad
    \Phi_k = \begin{bmatrix}
        B_0\phi_0^\top\\
        B_1\phi_1^\top\\
        \vdots \\
        B_{k-1}\phi^\top_{k-1}
    \end{bmatrix},\\
    \nonumber
    y_i &= \ak{-x_{i+1} + f_i(x_i) + B_i\phi_i^\top \theta_i = B_i\phi_i^\top \theta^\star, \quad \forall i \in \mathbb{N}}.
\end{align}
Note that at time $k$, the latest available online estimation cost is $h_{k-1}$, as $y_k$ is not yet available.
Solving \eqref{eq:nominal_cost} directly requires computing an expensive matrix inverse at each time step and maintaining an increasing memory. In this work, we study two recursive methods that alleviate this issue, both optimizing an online cost related to \eqref{eq:nominal_cost}. 
In particular, we introduce and analyze a recursive proximal learning method in Section \ref{sec:rpl}, and also study the well-established recursive least squares with forgetting factor \cite{johnstone1982exponential} in Section \ref{sec:rls}.

\subsection{Regret}
Speed or velocity-gradient controllers are introduced in \cite{fradkov1999nonlinear} to solve the continuous-time equivalent of the online estimation problem \eqref{eq:nominal_cost} in the framework of (integral) goal-oriented control. It is shown that these methods achieve finite estimation error in the limit, under the assumption that the controller at time $t$ can depend on the estimation error ${\theta}_t-\theta^\star$. Recent works \cite{gaudio2019connections, boffi2021regret} noted that if an exponential Lyapunov function exists for the nominal system, a speed-gradient descent-based uncertainty matching controller achieves finite quadratic cost on the state. In particular, the suboptimal closed-loop state trajectory $x_t$ satisfies
 \begin{equation*}
     J_\mathrm{cont}(x_0,u):=\lim_{T\rightarrow \infty}\int_{0}^{T}\|x_t\|^2dt = \mathcal{O}(1).
 \end{equation*}
Moreover, the existence of the exponential Lyapunov function also implies that $J_{\mathrm{cont}}(x_0,u^\star)$ for the optimal state evolution is finite. Defining continuous-time regret as $
    \mathcal{R}_{\mathrm{cont}}(u):= J_{\mathrm{cont}}(x_0,u) - J_{\mathrm{cont}}(x_0,u^\star)
$, an $\mathcal{O}(1)$ bound \ak{for} the extra cost accumulated by the adaptive control law as compared to the benchmark follows directly. In \cite{boffi2021regret} a discrete-time version of the speed-gradient algorithm is introduced with a  finite cost bound, but with the same assumption that at time $k$ the error $\theta_k-\theta^\star$ is available, making the controller non-causal. Without this assumption, the authors achieve $o(T)$ sublinear regret using an online learning toolkit. 
In this work, we consider causal controllers that minimize \eqref{eq:nominal_cost} at time $k$ with access only to $y_{k-1}$. To characterize the performance of such an adaptive, uncertainty-matching controller, we analyze the regret \eqref{eq:regret}.

\section{Recursive Proximal Learning}
\label{sec:rpl}

The recursive proximal learning algorithm   corresponds to the $\varepsilon$-scaled proximal operator of \eqref{eq:nominal_cost} evaluated at the \emph{previous} estimate
\begin{equation}
\label{eq:proximal_update}
\begin{split}
    \theta_{k} &= \mathrm{prox}_{\varepsilon h_{k-1}}\left(\theta_{k-1}\right)\\
    &=\argmin_{\theta}\left(h_{k-1}(\theta) + \frac{\varepsilon}{2}\|\theta-\theta_{k-1}\|^2\right),
    \end{split}
\end{equation}
for some $\theta_0 \in \mathbb{R}^p$ and $\varepsilon \in \mathbb{R}_+$. From the online learning perspective \cite{cesa2006prediction}, $\theta_k$ can be thought of as the minimizer of the latest available cost $g_{k-1}$ at time $k$\ak{, where we define}
\begin{equation}
    \label{eq:proximal_cost}
    g_{k-1}(\theta) \ak{:=} h_{k-1}(\theta) + \frac{\varepsilon}{2}\|\theta-\theta_{k-1}\|^2.
\end{equation}
While a closed-form solution to \eqref{eq:proximal_update} exists, it requires memory increasing with time. The following equivalent set of updates to \eqref{eq:proximal_update} \ak{provides a recursive implementation}. In particular, for all $k\in \mathbb{N}$, the following is equivalent to \eqref{eq:proximal_update} %
\begin{equation}
\label{eq:rpl}
\begin{split}
P_{k+1}^{-1} &= P_k^{-1} + \phi_kB_k^\top B_k \phi_k^\top, \\
H_{k+1} &= H_{k} + \phi_k B_k^\top B_k\phi_k^\top\\
s_{k+1} &= s_{k} + \phi_kB_k^\top y_k\\
\theta_{k+1} &= \theta_k - P_{k+1}\left(H_{k+1} \theta_k - s_{k+1}\right),
\end{split}
\end{equation}
initializing with $H_0=\boldsymbol{0}_{p \times p}$, $s_0=\boldsymbol{0}_{p\times 1}$ and $P_0^{-1} = \varepsilon I_p$. %
\subsection{Contraction in Parameter Space}
We define \ak{SE} as follows.

\begin{definition}
\label{def:pe}
    A matrix signal $\phi_k:\mathbb{N} \rightarrow \mathbb{R}^{p \times n}$ is called \ak{sufficiently} exciting if \ak{there exist $T_s \in \mathbb{N}$, and $\delta \in \mathbb{R}_+$}  such that
    \begin{align*}
      0 <\delta I_p \leq \sum_{i=0}^{T_s}\phi_i\phi_i^\top.
    \end{align*}
\end{definition}

We show below that under a \ak{SE} assumption, the RPL update \eqref{eq:proximal_update} enjoys a contraction property on the parameter; moreover, under an additional \ak{boundedness} assumption, the lifted input vector $\Phi_k \theta_k$ is also contractive.
\begin{assumption}
    \label{assum:beta_bound}
    There exists a $\beta \in \mathbb{R}_+$ such that
$\lim_{T\rightarrow \infty}\sum_{i=0}^{T}\phi_i B_i^\top B_i\phi_i^\top\leq \beta I_p
  $
\end{assumption}

\begin{lemma}
\label{lem:rpl}
    Assume $B_k \phi_k^\top$ is \ak{sufficiently} exciting, then  the RPL estimate  \eqref{eq:proximal_update} satisfies
    \begin{equation}
        \label{eq:theta_contraction}
    \begin{split}
        \|{\theta}_{k} - \theta^\star\| &\leq \eta\|{\theta}_{k-1} - \theta^\star\|, \qquad \forall k\geq T_s\\
        \|{\theta}_{k} - \theta^\star\| &\leq \|{\theta}_{k-1} - \theta^\star\|, \qquad \forall 0< k < T_s
    \end{split}
    \end{equation}
   where $\eta = \frac{\varepsilon}{\delta+\varepsilon} \in (0,1)$. Moreover, if Assumption \ref{assum:beta_bound} holds, and $\varepsilon < \frac{\delta\sqrt{\delta}}{\sqrt{\beta}-\sqrt{\delta}}$ then there exists a $\gamma \in (0,1)$ such that %
   \begin{align*}
       \|\Phi_{k+1} \left({\theta}_{k}-\theta^\star\right)\| &\leq \gamma\|\Phi_{k} \left({\theta}_{k-1}-\theta^\star\right)\|, \qquad \forall k \geq T_s,\\
       \|\Phi_{k+1} \left({\theta}_{k}-\theta^\star\right)\| &\leq \sqrt{\beta} \|\theta_0-\theta^\star\|, \qquad \forall 0\leq k < T_s.
   \end{align*}
 \end{lemma}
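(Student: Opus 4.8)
The plan is to work directly with the proximal characterization \eqref{eq:proximal_update} rather than the recursive form \eqref{eq:rpl}. Since $h_{k-1}$ is a convex quadratic and $\frac{\varepsilon}{2}\|\theta-\theta_{k-1}\|^2$ is $\varepsilon$-strongly convex, $g_{k-1}$ in \eqref{eq:proximal_cost} has a unique minimizer, characterized by the stationarity condition $\nabla h_{k-1}(\theta_k) + \varepsilon(\theta_k - \theta_{k-1}) = 0$. Using $Y_k = \Phi_k\theta^\star$ from \eqref{eq:nominal_cost}, we have $\nabla h_{k-1}(\theta) = \Phi_k^\top\Phi_k(\theta - \theta^\star)$; writing $M_k := \Phi_k^\top\Phi_k = \sum_{i=0}^{k-1}\phi_i B_i^\top B_i\phi_i^\top \succeq 0$ and $e_j := \theta_j - \theta^\star$, stationarity rearranges to the one-step error recursion $e_k = \varepsilon\,(M_k + \varepsilon I_p)^{-1} e_{k-1}$. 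Taking norms gives $\|e_k\| \leq \frac{\varepsilon}{\lambda_{\min}(M_k)+\varepsilon}\|e_{k-1}\|$. Since $M_k$ is non-decreasing in $k$ in the Loewner order and $B_k\phi_k^\top$ is sufficiently exciting, $M_k \succeq \delta I_p$ once $k$ has passed the excitation horizon $T_s$, which yields the contraction factor $\eta = \varepsilon/(\delta+\varepsilon)\in(0,1)$; for $0 < k < T_s$ only $M_k \succeq 0$ is available, giving the non-expansiveness bound.

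For the second claim I would express the lifted quantity as a quadratic form and substitute the error recursion: $\|\Phi_{k+1}e_k\|^2 = e_k^\top M_{k+1} e_k = \varepsilon^2\, e_{k-1}^\top (M_k+\varepsilon I_p)^{-1} M_{k+1} (M_k+\varepsilon I_p)^{-1} e_{k-1}$, whereas $\|\Phi_k e_{k-1}\|^2 = e_{k-1}^\top M_k e_{k-1}$. Hence it suffices to establish the Loewner inequality $\varepsilon^2 (M_k+\varepsilon I_p)^{-1} M_{k+1} (M_k+\varepsilon I_p)^{-1} \preceq \gamma^2 M_k$ for $k \geq T_s$. Assumption \ref{assum:beta_bound} gives $M_{k+1}\preceq \beta I_p$ for all $k$, so congruence by the symmetric matrix $(M_k+\varepsilon I_p)^{-1}$ bounds the left-hand side by $\varepsilon^2\beta\,(M_k+\varepsilon I_p)^{-2}$. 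This matrix and $M_k$ are simultaneously diagonalizable, so the claim reduces to the scalar inequality $\varepsilon^2\beta/(\mu+\varepsilon)^2 \leq \gamma^2\mu$ over the eigenvalues $\mu$ of $M_k$; since $\mu \mapsto \varepsilon^2\beta/(\mu(\mu+\varepsilon)^2)$ is decreasing and $\mu \geq \delta$ for $k\geq T_s$, it is enough to take $\gamma^2 = \varepsilon^2\beta/(\delta(\delta+\varepsilon)^2)$, and $\gamma \in (0,1)$ is exactly equivalent to $\varepsilon\sqrt{\beta} < \sqrt{\delta}(\delta+\varepsilon)$, i.e. $\varepsilon < \delta\sqrt{\delta}/(\sqrt{\beta}-\sqrt{\delta})$, the stated condition. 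The pre-$T_s$ bound then follows from $\|\Phi_{k+1}e_k\|^2 = e_k^\top M_{k+1}e_k \leq \beta\|e_k\|^2 \leq \beta\|\theta_0-\theta^\star\|^2$ via the non-expansiveness from the first part.

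The routine pieces are the stationarity condition and the norm/eigenvalue bookkeeping; the one genuinely delicate step is the matrix inequality for the second claim, because $M_{k+1}$ and $M_k$ need not commute, so I must first collapse $M_{k+1}$ to $\beta I_p$ by a congruence transformation before invoking simultaneous diagonalizability with $M_k$, and then check that the tightest admissible $\gamma$ produced by the scalar optimization coincides with the given threshold on $\varepsilon$. I would also keep careful track of the indexing convention in Definition~\ref{def:pe} (whose sum runs up to $T_s$) relative to the block structure of $\Phi_k$, so that $M_k \succeq \delta I_p$ is asserted precisely on the range of $k$ claimed in \eqref{eq:theta_contraction}.
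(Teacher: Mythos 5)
Your proof is correct, and the first half is the paper's own argument in different clothing: the stationarity condition gives $\tilde{\theta}_k=\varepsilon\left(\Phi_k^\top\Phi_k+\varepsilon I_p\right)^{-1}\tilde{\theta}_{k-1}$, which is exactly the paper's $(I_p-M_k)\tilde{\theta}_{k-1}$ with $M_k=(\Phi_k^\top\Phi_k+\varepsilon I_p)^{-1}\Phi_k^\top\Phi_k$, and the $\eta$-contraction and non-expansiveness follow identically. For the lifted contraction the two routes diverge in execution: the paper sandwiches the Euclidean norm between the two weighted norms, i.e. $\|\Phi_{k+1}\tilde{\theta}_k\|^2\le\sigma_{max}(\Phi_{k+1}^\top\Phi_{k+1})\|\tilde{\theta}_k\|^2\le\eta^2\beta\|\tilde{\theta}_{k-1}\|^2\le\eta^2(\beta/\delta)\|\Phi_k\tilde{\theta}_{k-1}\|^2$, whereas you substitute the error recursion into the quadratic form and stay in the Loewner order, collapsing $\Phi_{k+1}^\top\Phi_{k+1}\preceq\beta I_p$ by congruence and then optimizing per eigenvalue of $\Phi_k^\top\Phi_k$. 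Since your scalar bound is tightest at $\mu=\delta$, both arguments land on the identical $\gamma=\varepsilon\sqrt{\beta}/\bigl(\sqrt{\delta}(\delta+\varepsilon)\bigr)$ and the same threshold on $\varepsilon$, so nothing is gained quantitatively; what your version buys is a cleaner view of where the slack sits (in replacing $\Phi_{k+1}^\top\Phi_{k+1}$ by $\beta I_p$), at the cost of the extra care you note about non-commuting matrices. The indexing caveat you flag is real but not yours to fix: with $\Phi_k^\top\Phi_k=\sum_{i=0}^{k-1}\phi_iB_i^\top B_i\phi_i^\top$ and Definition~\ref{def:pe} summing up to $i=T_s$, the bound $\Phi_k^\top\Phi_k\succeq\delta I_p$ is first guaranteed at $k=T_s+1$; this off-by-one is present in the paper's proof as well and does not affect the substance of the result.
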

 \begin{proof}
 The proximal update \eqref{eq:proximal_update} can equivalently be represented by a single online Newton step update \cite{hazan2006logarithmic} on \eqref{eq:proximal_cost}. At time $k$, it is equivalently represented by
 \begin{equation*}
    \theta_{k} = \theta_{k-1} - \left(\nabla^2g_{k-1}(\theta)\Bigr|_{\theta=\theta_{k-1}}\right)^{-1}\nabla g_{k-1}(\theta)\Bigr|_{\theta=\theta_{k-1}}.
 \end{equation*}
 Subtracting the true $\theta^\star$ from both sides, denoting $\Tilde{\theta}_k:=\theta_k-\theta^\star$, noting that $Y_k = \Phi_k \theta^\star$ and recalling the definition of $h_{k-1}$ from \eqref{eq:nominal_cost}
    \begin{align*}
        \tilde{\theta}_{k} &= \tilde{\theta}_{k-1}-\left(\Phi_{k}^\top \Phi_{k} + \varepsilon I_p\right)^{-1}\Phi_{k}^\top \left(\Phi_{k}\theta_{k-1}-Y_{k}\right)\\
        &= \tilde{\theta}_{k-1}-\left(\Phi_{k}^\top \Phi_{k} + \varepsilon I_p\right)^{-1}\Phi_{k}^\top\Phi_{k}\tilde{\theta}_{k-1}\\
        &= \left(I_p - \left(\Phi_{k}^\top \Phi_{k} + \varepsilon I_p\right)^{-1}\Phi_{k}^\top\Phi_{k} \right)\tilde{\theta}_{k-1}\\
        &:= \left(I_p - M_{k}\right)\Tilde{\theta}_{k-1},%
    \end{align*}
where $M_k = \left(\Phi_{k}^\top \Phi_{k} + \varepsilon I_p\right)^{-1}\Phi_{k}^\top\Phi_{k} $. Using the orthonormality property of singular value decomposition of symmetric matrices and the \ak{sufficient} excitation of $B_k\phi_k$, for all $k\geq T_s$ 
    \begin{align*}
        \sigma_{min}\left(M_{k}\right) &\geq \frac{\delta}{\varepsilon + \delta} \in (0,1),\\
        \eta_k:= \|I_p - M_{k}\|&\leq \frac{\varepsilon}{\delta+\varepsilon} \in (0,1),
    \end{align*}
    obtaining the contraction result for parameter since $\sigma_{max}(M_k) \in(0,1)$. For all $0< k < T_s$, the \ak{SE} condition is not fulfilled, hence $\sigma_{min}(M_k) = 0$ and $\|\Tilde{\theta}_k\|\leq \| \Tilde{\theta}_{k-1}\|$. 
    
    To prove the second part, note that for a full column rank matrix $A$, 
        $\|A\Tilde{\theta}_{k+1}\| = \|\Tilde{\theta}_{k+1}\|_{A^\top A}$, 
    and ${{\lambda_{min}(A^\top A)}} = {\sigma_{min}(A^\top A)}$ , and likewise for $\lambda_{max}$. Then, for all $k\geq T_s$
    \begin{align*}
    &\sigma_{min}\left(\left(\Phi_{k+1}^\top\Phi_{k+1}\right)^{-1}\right)\|\Tilde{\theta}_{k}\|^2_{\Phi_{k+1}^\top\Phi_{k+1}} \leq  \|\Tilde{\theta}_{k}\|^2 \\
    &\leq \eta^2 \|\Tilde{\theta}_{k-1}\|^2 \leq \eta^2 \sigma_{max}\left(\left(\Phi_{k}^\top\Phi_{k}\right)^{-1}\right) \|\Tilde{\theta}_{k-1}\|_{\Phi_{k}^\top\Phi_{k}}^2.
    \end{align*}
    It follows that
    \begin{align}
         \|\Phi_{k+1}\Tilde{\theta}_{k}\| \leq \eta \sqrt{\frac{\sigma_{max}\left(\Phi_{k+1}^\top\Phi_{k+1}\right)}{\sigma_{min}\left(\Phi_{k}^\top\Phi_{k}\right)}} \|\Phi_{k} \Tilde{\theta}_{k-1}\|,
    \end{align}
    where we used the fact that $\sigma_{max}\left((A^\top A)^{-1}\right) = 1/\sigma_{min}(A^\top A)$ and $\sigma_{min}\left((A^\top A)^{-1}\right) = 1/\sigma_{max}(A^\top A)$. 
    The proof is completed by noting that for all $k\geq T_s$
    \begin{equation*}
        \gamma:= \eta \sqrt{\frac{\sigma_{max}\left(\Phi_{k+1}^\top\Phi_{k+1}\right)}{\sigma_{min}\left(\Phi_{k}^\top\Phi_{k}\right)}}
        \leq \frac{\varepsilon\sqrt{\beta}}{\varepsilon\sqrt{\delta} +\delta\sqrt{\delta}}<1,
    \end{equation*}
    given $\varepsilon < \frac{\delta\sqrt{\delta}}{\sqrt{\beta}-\sqrt{\delta}}$. Finally, for all $k \in \mathbb{N}$, $\|\Phi_{k+1}\Tilde{\theta}_k\|\leq\|\Phi_{k+1}\|\|\Tilde{\theta}_k\|\leq \sqrt{\beta}\|\Tilde{\theta}_0\|$.
 \end{proof}

\subsection{Closed-loop Analysis}

Consider the suboptimal state evolution \eqref{eq:suboptimal} under the RPL controller $u_k^{\mathrm{RPL}} = \phi_k^\top(x_k)\theta_k$, with $\theta_k$ given in \eqref{eq:proximal_update}. We can represent it equivalently as 
\begin{equation}
    \label{eq:suboptimal_rpl}
    x_{k+1} = f_k(x_k)+ \underbrace{S_k\left(\Phi_{k+1}{\theta}_k-\Phi_{k+1}\theta^\star\right)}_{w_k(x_k)},
\end{equation}
where  $S_k:= 
    \begin{bmatrix}
        \boldsymbol{0} & \hdots & \boldsymbol{0} & I_{n},
    \end{bmatrix} \in \mathbb{R}^{n \times n(k+1)}$
is a selector matrix that selects the last component of $\Phi_{k+1}\left({\theta}_k-\theta^\star\right)$, corresponding to $B_k\phi_k^\top(\theta_k-\theta^\star)$. Similarly, the optimal state evolution under $u^\star$ is given by
\begin{equation}
    \label{eq:optimal_rpl}
    x_{k+1}^\star =f_k(x^\star_k) +\underbrace{S_k\left(\Phi^\star_{k+1}\theta^\star-\Phi^\star_{k+1}\theta^\star\right)}_{\boldsymbol{0}},
\end{equation}
where $\Phi^\star_{k+1}:= \Phi(x^\star_{[0,\hdots,k]})$. In the following theorem, we characterize the closed-loop system \eqref{eq:suboptimal_rpl} in terms of asymptotic stability and regret. In particular, we show that the regret of the RPL controller scales with  $\mathcal{O}\left(\|\Tilde{\theta}_0\| T_s\right)$.

\begin{theorem}
\label{thm:rpl}
    \ak{ Let Assumption \ref{assum:ediss} hold, and $B_k\phi_k^\top(x_k)$ be sufficiently exciting under the RPL adaptive controller, then this controller} 
    \begin{enumerate}
        \item[(i)] renders the closed-loop system asymptotically stable,
        \item[(ii)] achieves finite regret
    \end{enumerate}
        \begin{equation*}
\mathcal{R}_T(\ak{u^{\mathrm{RPL}}})\leq c_w bL_c\|\Tilde{\theta}_0\|\left(\frac{T_s}{1-\rho}\!+\!\frac{\rho^T\!+\!(1-\eta)\rho+\eta}{(1-\rho)^2(1-\eta)}\right),
    \end{equation*} 
for the $\eta$ established in Lemma \ref{lem:rpl}. If in addition, $B_k\phi_k$ also satisfy  Assumption \ref{assum:beta_bound} and $\varepsilon < \frac{\delta \sqrt{\delta}}{\sqrt{\beta} - \sqrt{\delta}}$, then
        \begin{equation*}
\mathcal{R}_T(\ak{u^{\mathrm{RPL}}})\leq c_w c_{p}L_c\|\Tilde{\theta}_0\|\left(\frac{T_s}{1-\rho}\!+\!\frac{\rho^T\!+\!(1-\gamma)\rho+\gamma}{(1-\rho)^2(1-\gamma)}\right),
    \end{equation*} 
 where $c_p := \|\Phi_{T_s}\| \leq \sqrt{\beta}$ and $\gamma$ is defined in Lemma \ref{lem:rpl}.
\end{theorem}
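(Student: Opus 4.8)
The plan is to treat the closed loop \eqref{eq:suboptimal_rpl} as the nominal dynamics driven by the artificial, state-dependent disturbance $w_k(x_k)=B_k\phi_k^\top\tilde\theta_k$, with $\tilde\theta_k:=\theta_k-\theta^\star$, to bound $\|w_k\|$ via \cref{lem:rpl}, and then to invoke the \acrshort{ediss} estimate of \cref{assum:ediss} against the benchmark \eqref{eq:optimal_rpl} in order to control $\|x_k-x_k^\star\|$. Asymptotic stability then follows by letting $k\to\infty$, and the regret bound by summing the Lipschitz cost gap \eqref{eq:lipschitz_cost} over the horizon.

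First I would bound the disturbance: since $\|B_k\phi_k^\top\|\le b$, we have $\|w_k\|\le b\|\tilde\theta_k\|$, and \cref{lem:rpl} gives $\|\tilde\theta_k\|\le\|\tilde\theta_0\|$ for $0<k<T_s$ and $\|\tilde\theta_k\|\le\eta^{\,k-T_s+1}\|\tilde\theta_0\|$ for $k\ge T_s$; in particular $\sum_k\|w_k\|<\infty$ and $\|w_k\|\to 0$ geometrically. Applying \cref{assum:ediss} with the reference trajectory \eqref{eq:optimal_rpl} starting at $x_0^\star=x_0$ and carrying zero disturbance yields $\|x_k-x_k^\star\|\le c_w\sum_{i=0}^{k-1}\rho^{k-i-1}\|w_i\|$, which is uniformly bounded by $c_wb\|\tilde\theta_0\|/(1-\rho)$, so the whole trajectory stays in a fixed ball on which $c$ is $L_c$-Lipschitz; provided $b\|\tilde\theta_0\|\le r_w$ the induced disturbance stays in $\mathcal{B}_{r_w}$ so that \cref{assum:ediss} indeed applies. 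For part (i), comparing \eqref{eq:optimal_rpl} to the trivial trajectory at the origin and using $f_k(0)=0$ gives $\|x_k^\star\|\le c_0\rho^k\|x_0\|\to 0$, while the convolution $\sum_i\rho^{k-i-1}\|w_i\|\to 0$ because $\|w_i\|\to 0$; hence $x_k\to 0$ and $\theta_k\to\theta^\star$ by \cref{lem:rpl}, i.e.\ the closed loop is asymptotically stable.

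For part (ii), the Lipschitz bound gives $\mathcal{R}_T(u^{\mathrm{RPL}})\le L_c\sum_{k=0}^{T-1}\|x_k-x_k^\star\|\le L_cc_w\sum_{k=0}^{T-1}\sum_{i=0}^{k-1}\rho^{k-i-1}\|w_i\|$. I would interchange the two sums, evaluate the inner geometric series in $\rho$, and split the outer sum over $i$ at $T_s$: the (at most) $T_s$ "saturated" indices with $\|w_i\|\le b\|\tilde\theta_0\|$ produce the $T_s/(1-\rho)$ term, and the geometrically decaying tail, convolved with the $\rho^{\,\cdot}$ kernel, produces the remaining term after summing the resulting geometric series, giving the first displayed bound in $\eta$. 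For the sharper second bound I would instead use the lifted form $w_k=S_k\Phi_{k+1}\tilde\theta_k$, so $\|w_k\|\le\|\Phi_{k+1}\tilde\theta_k\|$; monotonicity of $\|\Phi_{k+1}\|$ bounds this by $\|\Phi_{T_s}\|\,\|\tilde\theta_0\|=c_p\|\tilde\theta_0\|$ for $k<T_s$, and the second part of \cref{lem:rpl} bounds it by $\gamma^{\,k-T_s+1}\|\Phi_{T_s}\tilde\theta_{T_s-1}\|\le c_p\gamma^{\,k-T_s+1}\|\tilde\theta_0\|$ for $k\ge T_s$; repeating the identical accounting with $(c_p,\gamma)$ in place of $(b,\eta)$, and using $c_p\le\sqrt\beta$ from \cref{assum:beta_bound}, gives the second bound.

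The conceptual steps above are routine; the main effort is the bookkeeping that turns the double sum into the displayed closed forms — in particular the convolution of the $\rho^k$ \acrshort{ediss} kernel with the two-regime disturbance envelope, whose partial geometric sum is proportional to $(\eta/\rho)^{\,\cdot}$ and degenerates when $\eta=\rho$; this is handled either by retaining the $\rho^T$ remainder terms (which is where the stated constants arise) or by a slightly cruder summation. One must also keep track of the index ranges when $T<T_s$ and double-check that the induced disturbance remains in $\mathcal{B}_{r_w}$.
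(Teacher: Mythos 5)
Your proposal is correct and follows essentially the same route as the paper: bound the mismatch disturbance via the contraction/non-expansiveness of \cref{lem:rpl}, apply the \acrshort{ediss} estimate to $\|x_k-x_k^\star\|$, split the resulting convolution at $T_s$, and repeat with the lifted quantity $\Phi_{k+1}\tilde\theta_k$ for the sharper bound (the paper handles the double sum with the Cauchy product inequality rather than an exact geometric evaluation, which sidesteps the $\eta=\rho$ degeneracy you flag). Your explicit check that the induced disturbance remains in $\mathcal{B}_{r_w}$ is a point the paper leaves implicit.
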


\begin{proof}
  To show (i), recall the definition of \acrshort{ediss}, and the fact that $f_k({0}) = {0}$ for all $k\in \mathbb{N}$, then consider the suboptimal state evolution of the nonlinear system under the RPL adaptive controller \eqref{eq:suboptimal_rpl} for all $k\in \mathbb{N}_+$, recalling that $\Tilde{\theta}_k = \theta_k - \theta^\star$
  \begin{equation}
  \label{eq:x_bounded}
    \begin{split}
              \|x_k\| &\leq c_0\rho^k\|x_0\| + c_w\sum_{i=0}^{k-1}\rho^{k-i-1}\|B_i \phi_i\Tilde{\theta}_i\|\\
              &< c_0\|x_0\|+\frac{c_w b\| \Tilde{\theta}_0\|}{1-\rho} ,
    \end{split}
    \end{equation}
    where we use the submultiplicativity property of the norms, the bound on $B_k\phi_k$ and non-expansivity of $\|\Tilde{\theta}_k\|$ from Lemma \ref{lem:rpl}. \ak{The asymptotic stability follows from the input-to-state stability \cite{jiang2001input} of the closed-loop state in \eqref{eq:x_bounded}:  by Lemma \ref{lem:rpl}, $\lim_{k \rightarrow \infty}\|\Tilde{\theta}_k\| = 0$ implies that $\lim_{k \rightarrow \infty}\|x_k\| = 0$}.
   
   Next, we show (ii), by first noting that since $x_k$ and $x_k^\star$ are bounded from \eqref{eq:x_bounded}, it follows from \eqref{eq:lipschitz_cost}
\begin{align*}
    &\mathcal{R}_T(\ak{u^{\mathrm{RPL}}})\leq \underbrace{L_c\sum_{k=T_s+1}^{T-1}\|x_k-x_k^\star\|}_{\text{I}} + \underbrace{L_c\sum_{k=0}^{T_s}\|x_k-x_k^\star\|}_{\text{ II}},
\end{align*}
where II represents the cost in the initial $T_s$-long non-PE phase, and  I the cost improvement afterwards.

Using the \acrshort{ediss} condition on the nominal system  and noting that  $x_k$ and $x_k^*$, given by \eqref{eq:suboptimal_rpl}  and $\eqref{eq:optimal_rpl}$, respectively, are equal in the first time step, for all $k>T_s$
\begin{equation}
\label{eq:x_bound_I}
\begin{split}
    &\|x_k-x_k^\star\| \leq c_wb\sum_{i=0}^{k-1}\rho^{k-i-1}\|\Tilde{\theta}_i\|\\
    &\leq c_w b\sum_{i=1}^{T_s}\rho^{k-i}\|\Tilde{\theta}_{i-1}\| + c_wb\sum_{i=T_s}^{k-1}\rho^{k-i-1}\|\Tilde{\theta}_{i}\|\\
    &\leq c_wb\rho^{k-T_s}\|\Tilde{\theta}_{0}\|\sum_{i=1}^{T_s}\rho^{T_s-i} + c_wb\sum_{i=T_s}^{k-1}\rho^{k-i-1}\|\Tilde{\theta}_{i}\|\\
    &\leq \rho c_wc_{T_s}b\|\Tilde{\theta}_0\|\rho^{k-T_s} +  c_wb\|\Tilde{\theta}_0\|\sum_{i=T_s}^{k-1}\rho^{k-i-1}\eta^{i-T_s+1},
\end{split}
\end{equation}
where, \ak{defining $c_{T_s}= \frac{1-\rho^{T_s}}{1-\rho}$}, the second inequality follows from the boundedness of $B_k\phi_k$, the third from the non-expansivity of $\|\Tilde{\theta}_k\|$ from Lemma \ref{lem:rpl} and rearrangement of the constants,  and the last one from the contraction of $\|\Tilde{\theta}_k\|$. For $0<k\leq T_s$, we may only use the non-expansive result from Lemma \ref{lem:rpl} and hence
\begin{equation}
\label{eq:x_bound_II}
    \|x_k-x_k^\star\| \leq c_w b\sum_{i=1}^{k}\rho^{k-i}\|\Tilde{\theta}_{i-1}\|\leq c_wb\|\Tilde{\theta}_0\|\sum_{i=1}^{k}\rho^{k-i}.
\end{equation}
Using the Cauchy Product inequality defined for two finite series $\{a_i\}_{i=1}^T$ and $\{b_i\}_{i=1}^T$
\begin{equation*}
    \label{eq:cauchy_product}\textstyle{
    \sum_{i=0}^T\left|\sum_{j=0}^{i}a_jb_{i-j}\right| \leq  \left(\sum_{i=0}^T|a_i|\right) \left(\sum_{j=0}^T|b_j|\right)},
\end{equation*} and summing \eqref{eq:x_bound_II} and \eqref{eq:x_bound_I} over $T_s-1$ and $T-T_s-2$, respecitvely, we get
\begin{align*}
    \text{ II}&\leq c_wbL_c\|\Tilde{\theta}_0\|\sum_{k=0}^{T_s-1}\sum_{i=0}^k\rho^{k-i}\leq c_wc_{T_s}T_sbL_c\|\Tilde{\theta}_0\|\\
    \text{I}
    &\leq c_wbL_c\|\tilde{\theta}_0\|\left(\rho c_{T_s}\!\sum_{k=1}^{T-T_s-1}\rho^{k}\!+\!\sum_{k=0}^{T-T_s-2}\sum_{i=0}^{k}\rho^{k-i}\eta^{i+1}\right)\\
    &\leq \rho c_wbL_c\|\Tilde{\theta}_0\|\frac{\rho^T + \rho}{(1-\rho)^2}\\
    &\quad + c_wb\eta L_c\|\Tilde{\theta}_0\| \frac{\left(1-\rho^{T-T_s-1}\right)\left(1-\eta^{T-T_s-1}\right)}{\left(1-\rho\right)\left(1-\eta\right)}.
\end{align*}
The first regret bound is achieved by combining the terms. 

Now, if Assumption \ref{assum:beta_bound} is satisfied and $\varepsilon < \frac{\delta \sqrt{\delta}}{\sqrt{\beta} - \sqrt{\delta}}$, one can make use of the lifted input contraction in Lemma \ref{lem:rpl}. In particular, for all $k>T_s$, referring to \eqref{eq:suboptimal_rpl} and \eqref{eq:optimal_rpl} again
   \begin{equation}
       \label{eq:x_contractive}
   \begin{split}
       &\|x_k-x_k^\star\| \leq c_w\sum_{i=0}^{k-1}\rho^{k-i-1}\|w_i(x_i)\|\\
       &\leq c_w\|S_k\|\sum_{i=0}^{k-1}\rho^{k-i-1}\|\Phi_{i+1}\Tilde{\theta}_{i}\|\\
       &\leq c_w\sum_{i=1}^{T_s}\rho^{k-i}\|\Phi_{i}\Tilde{\theta}_{i-1}\| + c_w\sum_{i=T_s}^{k-1}\rho^{k-i-1}\|\Phi_{i+1}\Tilde{\theta}_{i}\|\\
       &\begin{split}
        \leq c_w\sum_{i=1}^{T_s}\rho^{k-i}&\|\Phi_{i}\Tilde{\theta}_{i-1}\| \\
        &+c_w\|\Phi_{T_s}\Tilde{\theta}_{T_s-1}\|\sum_{i=T_s}^{k-1}\rho^{k-i-1}\gamma^{i-T_s+1}
        \end{split}
        \\
       &\begin{split}
        =c_w\rho^{k-T_s}\sum_{i=1}^{T_s}&\rho^{T_s-i}\|\Phi_{i}\Tilde{\theta}_{i-1}\|\\ &+c_w\|\Phi_{T_s}\Tilde{\theta}_{T_s-1}\|\sum_{i=T_s}^{k-1}\rho^{k-i-1}\gamma^{i-T_s+1}
       \end{split}
       \\
       &\leq \rho c_wc_pc_{T_s}\|\Tilde{\theta}_0\|\rho^{k-T_s}+c_wc_p\|\Tilde{\theta}_0\| \sum_{i=T_s}^{k-1}\rho^{k-i-1}\gamma^{i-T_s+1},
   \end{split}
   \end{equation}
   where the second inequality follows from the submultiplicative property of the norms, the third from the fact that $\|S_k\|=1$, and the last two from Lemma \ref{lem:rpl} and by denoting $c_{T_s}:= \frac{1-\rho^{T_s}}{1-\rho}$. Similarly, for $0<k\leq T_s$
   \begin{equation}
       \label{eq:x_nonexpansive}
    \|x_k-x_k^\star\| \leq c_w\sum_{i=1}^{k}\rho^{k-i}\|\Phi_{i}\Tilde{\theta}_{i-1}\|\leq c_wc_p\|\Tilde{\theta}_0\|\sum_{i=1}^{k}\rho^{k-i}.
    \end{equation}
We then  obtain new bounds for $\text{I}$ and $\text{II}$
\begin{align*}
    \text{II}&\leq c_wc_p\|\Tilde{\theta}_0\|L_c\sum_{k=0}^{T_s-1}\sum_{i=0}^k\rho^{k-i}\leq T_s \|\Tilde{\theta}_0\|c_wc_{T_s}c_pL_c,
\end{align*}
where the first inequality follows from \eqref{eq:x_nonexpansive} and the second one from the Cauchy Product inequality.
Using \eqref{eq:x_contractive}
\begin{align*}
    &\text{I}\!\leq\! c_wc_p\|\Tilde{\theta}_0\|L_c \left(\rho c_{T_s}\sum_{k=1}^{T-T_s-1}\rho^k\!+\! \sum_{k=0}^{T-T_s-2}\!\sum_{i=0}^{k}\rho^{k-i}\gamma^{i+1}\right) \\
    &\leq  \rho c_wc_pc_{T_s}\|\Tilde{\theta}_0\|L_c\frac{\rho - \rho^{T-T_s}}{1-\rho}\\
    &\qquad  +c_wc_p\|\Tilde{\theta}_0\| \gamma L_c \frac{\left(1-\rho^{T-T_s-1}\right)\left(1-\gamma^{T-T_s-1}\right)}{\left(1-\rho\right)\left(1-\gamma\right)}.
\end{align*}
Noting that the first term in the above summation is upper bounded by $\rho c_wc_p\|\Tilde{\theta}_0\|L_c\left(\frac{\rho^T +\rho}{(1-\rho)^2}\right)$ and combining I and II results in the second stated bound.

\end{proof}

The theorem states that given the assumptions on the nominal system, both \eqref{eq:suboptimal_rpl} and \eqref{eq:optimal_rpl} state trajectories converge to the origin. It is important to note that no further restrictions are put on the stage costs $c$, other than local Lipschitz continuity. Hence, the finite regret result of the theorem does not imply the cost $\sum_{k=0}^{T-1}c(x_k)$ is minimized, but the relative cost performance is bounded. Moreover, it shows that this performance, in terms of regret scales with the initial exploratory time period $T_s$, and cannot be avoided, as during these transients the parameter updates are only nonexpansive by Lemma \ref{lem:rpl}. If the additional assumptions on $\varepsilon$ and $B_k\phi_k$ are satisfied, then the constant term is updated from a uniform bound $b$ to $c_p:=\|\Phi_{T_s}\|$, which relates to the regression matrix only in the initial period. 

In the following lemma, we show that Lipschitz continuity of the basis matrices and $\phi({0})=\boldsymbol{0}_{p \times m}$ is a sufficient condition to fulfill Assumption \ref{assum:beta_bound}.

\begin{lemma}
    Let $B_k \phi_k$ be \ak{sufficiently} exciting, \ak{  $\phi_k({0})=\boldsymbol{0}_{p \times m}$, and $\phi_k$ be uniformly $L-$Lipschitz continuous, that is, there exists a $L\in \mathbb{R}_+$ such that
    \begin{equation*}
        \|\phi_k(x)-\phi_k(y)\| \leq L\|x-y\|. \quad \forall x,y \in \mathbb{R}^n, \forall k \in \mathbb{N},
    \end{equation*}}
    then Assumption \ref{assum:beta_bound} holds.
\end{lemma}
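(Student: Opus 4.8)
The plan is to transport the geometric decay of the RPL closed-loop state, already guaranteed by \cref{thm:rpl}(i), through the Lipschitz basis matrices to the increments $\phi_kB_k^\top B_k\phi_k^\top$, and then sum the resulting geometric series. It is worth noting at the outset that this is not circular: the asymptotic stability in \cref{thm:rpl}(i), and the state bound \eqref{eq:x_bounded} underpinning it, use only the SE condition and the non-expansive/contractive estimates for $\|\Tilde{\theta}_k\|$ from the first part of \cref{lem:rpl} --- neither of which invokes \cref{assum:beta_bound}.

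First I would upgrade the convergence $\|x_k\|\to0$ of \cref{thm:rpl}(i) to a geometric rate. Inserting $\|B_i\phi_i^\top\Tilde{\theta}_i\|\le b\|\Tilde{\theta}_i\|$ into \eqref{eq:x_bounded} and using $\|\Tilde{\theta}_i\|\le\|\Tilde{\theta}_0\|$ for $i\le T_s$ and $\|\Tilde{\theta}_i\|\le\eta^{i-T_s+1}\|\Tilde{\theta}_0\|$ for $i\ge T_s$ from \cref{lem:rpl}, the convolution $\sum_{i=0}^{k-1}\rho^{k-i-1}\|\Tilde{\theta}_i\|$ is bounded by a geometric sequence of base $\max\{\rho,\eta\}$, up to a factor linear in $k$ in the borderline case $\rho=\eta$. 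Absorbing that factor into a slightly larger base, there exist $C>0$ and $\bar\rho\in(0,1)$ with $\|x_k\|\le C\bar\rho^{k}$ for all $k\in\mathbb{N}$.

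Next, $\phi_k(\boldsymbol 0)=\boldsymbol 0_{p\times m}$ and uniform $L$-Lipschitz continuity give $\|\phi_k(x_k)\|=\|\phi_k(x_k)-\phi_k(\boldsymbol 0)\|\le L\|x_k\|\le LC\bar\rho^{k}$, and with a uniform bound $\|B_k\|\le\bar b$ on the known input matrix this yields $\|\phi_kB_k^\top B_k\phi_k^\top\|=\|B_k\phi_k^\top\|^2\le(\bar bLC)^2\bar\rho^{2k}$. Since each summand is symmetric positive semidefinite, $\phi_iB_i^\top B_i\phi_i^\top\le\|\phi_iB_i^\top B_i\phi_i^\top\|I_p$, hence for every $T$
\begin{equation*}
  \sum_{i=0}^{T}\phi_iB_i^\top B_i\phi_i^\top\le\Bigl(\textstyle\sum_{i=0}^{T}(\bar bLC)^2\bar\rho^{2i}\Bigr)I_p\le\frac{(\bar bLC)^2}{1-\bar\rho^{2}}I_p=:\beta I_p .
\end{equation*}
The partial sums are nondecreasing in the Loewner order and bounded above by $\beta I_p$, so they converge, and the limit satisfies $\lim_{T\to\infty}\sum_{i=0}^{T}\phi_iB_i^\top B_i\phi_i^\top\le\beta I_p$, which is exactly \cref{assum:beta_bound}.

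The estimates themselves are elementary; the two points that actually need care are obtaining a single uniform geometric decay rate $\bar\rho$ for $\|x_k\|$ (in particular handling $\rho=\eta$) and the non-circularity observation that \cref{thm:rpl}(i) is established without \cref{assum:beta_bound}. One also relies here on a uniform bound for $B_k$, consistent with the standing assumptions on the known dynamics.
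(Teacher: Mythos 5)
Your proposal is correct and follows essentially the same route as the paper: both arguments combine the E-$\delta$-ISS convolution bound \eqref{eq:x_bounded} with the non-expansive/contractive estimates on $\|\Tilde{\theta}_k\|$ from Lemma~\ref{lem:rpl} (so no circularity with Assumption~\ref{assum:beta_bound}), then use $\phi_k(\boldsymbol{0})=\boldsymbol{0}$, the Lipschitz bound $\|\phi_k(x)\|\leq L\|x\|$, and boundedness of $B_k$ to sum the increments $\phi_kB_k^\top B_k\phi_k^\top$. The only (immaterial) difference is that you first extract a pointwise geometric decay $\|x_k\|\leq C\bar{\rho}^k$ and sum $\bar{\rho}^{2k}$, whereas the paper bounds $\sum_k\|x_k\|$ directly via the Cauchy product inequality and uses the uniform bound $b$ on $\|B_k\phi_k^\top\|$ to reduce the quadratic sum to a linear one.
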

\begin{proof}
    Given \ak{SE} holds,  summing the first bound in \eqref{eq:x_bounded} over a horizon $T$ and taking the limit we get
  \begin{align*}
      &\uplim_{T\rightarrow \infty}\sum_{k=0}^T\|x_k\|\leq \frac{c_0\|x_0\|}{1-\rho}+c_w b\sum_{k=1}^{T_s}\sum_{i=1}^{k-1}\rho^{k-i}\|\Tilde{\theta}_{i-1}\|\\
      &<\uplim_{T\rightarrow \infty}c_w b\|\Tilde{\theta}_0\| \left(\rho c_{T_s}\sum_{k=1}^{T-T_s-1}\rho^k +\sum_{k=0}^{T-T_s-2}\sum_{i=0}^{k}\rho^{k-i}\eta^{i+1}\right)\\
      &<  \frac{\rho^2 c_w c_{T_s}b\|\Tilde{\theta}_0\|}{1-\rho}+ \frac{c_wb\|\Tilde{\theta}_0\|\eta}{(1-\rho)(1-\eta)}, %
  \end{align*}
    where the first inequality follows from the \acrshort{ediss} definition and geometric series, the second from Lemma \ref{lem:rpl} and the last from the Cauchy Product inequality. The result then follows from the boundedness of $B_k$ and $\phi_k$ and \ak{$\|\phi_k(x)\|\leq L\|x\|$.}
\end{proof}

\section{RLS with Exponential Forgetting}
\label{sec:rls}
The RLSFF algorithm minimizes the online cost 
\begin{equation}
\begin{split}
    \label{eq:rls_ff}
    g_{k-1}^f(\theta) := \frac{1}{2}\sum_{i=0}^{k-1}\lambda^{2(k-1-i)}&\|B_i\phi_i^\top \left(\theta-\theta^\star\right)\|^2\\ 
    &+ \frac{\lambda^{2k} \varepsilon}{2}\|\theta - \theta_0\|^2,    
    \end{split}
\end{equation}
 for a fixed exponential forgetting factor $\lambda^2 \in (0,1)$, where the first term can be thought of as the discounted version of \eqref{eq:nominal_cost} and the second, as a discounted regularizer on the \emph{initial} parameter error. The update can be recursively implemented as follows \cite{bruggemann2021exponential} for all $k \in \mathbb{N}$
\begin{align}
\label{eq:rls_P_update}
        P_{k+1}^{-1} &= \lambda^2 P_{k}^{-1} + \phi_{k}B_{k}^\top B_{k}\phi_{k}^\top,\\
            \theta_{k+1} &= \theta_{k} - P_{k+1} \phi_{k} B_{k}^\top(B_k\phi_{k}^\top\theta_{k} - y_{k}),
            \label{eq:rls_theta_update}
    \end{align}
where $\theta_0 \in \mathbb{R}^p$ is an initial guess and $P_0^{-1} = \varepsilon I_p$.  The convergence of the above update requires a persistence of excitation condition in the following sense \cite{johnstone1982exponential}.
\begin{definition}
\label{def:strong_pe}
 A matrix signal $\phi_k:\mathbb{N} \rightarrow \mathbb{R}^{p \times n}$ is called  persistently exciting  if it is bounded and there exist $ \delta \in \mathbb{R}_+$,  and $T_s \in \mathbb{N}$ such that for all $k_0 \in \mathbb{N}$
    \begin{equation*}
     \delta I_p \leq \sum_{i=k_0}^{k_0+T_s}\phi_i\phi_i^\top.
    \end{equation*}
\end{definition}
The PE condition in Definition \ref{def:strong_pe}, commonly found in the adaptive control literature \cite{aastrom2013adaptive, johnstone1982exponential},  is a stronger version of \ak{SE} in Definition \ref{def:pe}. The latter requires the positive definiteness of the regressor matrix only in \ak{some finite interval of time}, rather than, \ak{persistently}, in all \ak{$T_s$-long} time intervals. 
\begin{lemma}
\label{lem:rls}
    Assume $B_k\phi_k^\top$ is  persistently exciting,  then the RLSFF update \eqref{eq:rls_theta_update} satisfies
    \begin{align*}
        \|\theta_k-\theta^\star\| &\leq c_r\lambda^{k-T_s}\|\theta_0-\theta^\star\|,  &&\forall k\geq T_s\\
         \|\theta_k-\theta^\star\| &\leq \|\theta_0-\theta^\star\|,  &&\forall 0<k<T_s
    \end{align*}
    where $c_r^2 = {\varepsilon\left(\lambda^{2T_s}-\lambda^{-2)}\right)}/\left(\delta\left(1-\lambda^{-2}\right)\right)$.
\end{lemma}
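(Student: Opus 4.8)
The plan is to reduce the claim to two ingredients: an exact closed form for the parameter error in terms of the covariance $P_k$, and a lower bound on $\lambda_{min}(P_k^{-1})$ supplied by persistence of excitation. Throughout, write $\tilde\theta_k := \theta_k-\theta^\star$ and $A_i:=\phi_iB_i^\top B_i\phi_i^\top \ge 0$.

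First I would establish the closed form. Substituting $y_k=B_k\phi_k^\top\theta^\star$ into \eqref{eq:rls_theta_update} gives $\tilde\theta_{k+1}=(I_p-P_{k+1}A_k)\tilde\theta_k$, and since \eqref{eq:rls_P_update} reads $A_k=P_{k+1}^{-1}-\lambda^2 P_k^{-1}$ this collapses to $\tilde\theta_{k+1}=\lambda^2 P_{k+1}P_k^{-1}\tilde\theta_k$. Telescoping and using $P_0^{-1}=\varepsilon I_p$ then gives $\tilde\theta_k=\lambda^{2k}\varepsilon\,P_k\tilde\theta_0$. Unrolling \eqref{eq:rls_P_update} in the same way gives $P_k^{-1}=\lambda^{2k}\varepsilon I_p+\sum_{i=0}^{k-1}\lambda^{2(k-1-i)}A_i=:\lambda^{2k}\varepsilon I_p+R_k$ with $R_k\ge 0$. (Equivalently, this closed form follows from $\nabla g_{k-1}^f(\theta_k)=0$.)

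Next I would turn this into a single ``master'' estimate. Since $P_k^{-1}\ge\lambda^{2k}\varepsilon I_p$ we have $P_k\le(\lambda^{2k}\varepsilon)^{-1}I_p$, hence $\tilde\theta_0^\top P_k\tilde\theta_0\le(\lambda^{2k}\varepsilon)^{-1}\|\tilde\theta_0\|^2$; combining this with $\tilde\theta_k=\lambda^{2k}\varepsilon P_k\tilde\theta_0$ and the splitting $\|P_k\tilde\theta_0\|\le\|P_k^{1/2}\|\,\|P_k^{1/2}\tilde\theta_0\|=\lambda_{min}(P_k^{-1})^{-1/2}(\tilde\theta_0^\top P_k\tilde\theta_0)^{1/2}$ yields
\[
\|\tilde\theta_k\|^2 \;\le\; \frac{\lambda^{2k}\varepsilon}{\lambda_{min}(P_k^{-1})}\,\|\tilde\theta_0\|^2 .
\]
For $0<k\le T_s$ it then suffices to bound $\lambda_{min}(P_k^{-1})\ge\lambda^{2k}\varepsilon$ using the regularizer term alone; this cancels and delivers the non-expansive bound $\|\tilde\theta_k\|\le\|\tilde\theta_0\|$. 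For $k>T_s$ the decay must come from a sharper bound $\lambda_{min}(P_k^{-1})\ge\lambda_{min}(R_k)$: I would discard all but the most recent $T_s+1$ regressor terms, use $\lambda^{2j}\ge(\sum_{l=0}^{T_s}\lambda^{-2l})^{-1}$ for $0\le j\le T_s$ together with $A_i\ge 0$ to get $R_k\ge(\sum_{l=0}^{T_s}\lambda^{-2l})^{-1}\sum_{j=0}^{T_s}A_{k-1-j}$, and then invoke persistence of excitation on the window $[k-1-T_s,\,k-1]$, i.e. $\sum_{j=0}^{T_s}A_{k-1-j}\ge\delta I_p$. This gives $\lambda_{min}(P_k^{-1})\ge\delta(\sum_{l=0}^{T_s}\lambda^{-2l})^{-1}=\delta\lambda^{2T_s}/\sum_{l=0}^{T_s}\lambda^{2l}$; substituting into the master estimate, cancelling $\lambda^{2k}/\lambda^{2T_s}=\lambda^{2(k-T_s)}$, and using $\sum_{l=0}^{T_s}\lambda^{2l}=(\lambda^{2T_s}-\lambda^{-2})/(1-\lambda^{-2})$ gives exactly $\|\tilde\theta_k\|\le c_r\lambda^{k-T_s}\|\tilde\theta_0\|$ with the stated $c_r$. (Using the tight value $\lambda^{2T_s}=\min_{0\le j\le T_s}\lambda^{2j}$ instead would even give the smaller $c_r^2=\varepsilon/\delta$.)

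The step I expect to be the main obstacle is this last lower bound on $\lambda_{min}(P_k^{-1})$ in the excited regime: it must be \emph{uniform in $k$}, otherwise the factor $\lambda^{2k}$ from the regularizer in the master estimate fails to cancel and no decay results. This is precisely why persistence of excitation has to be localized to the \emph{most recent} $T_s+1$ samples — in $R_k$ the weights attached to early samples are $\lambda^{2(k-1-i)}$, which decay with $k$ and contribute nothing asymptotically. A minor secondary point is the boundary $k=T_s$, where a complete window of $T_s+1$ samples is not yet present in $R_k$, so only the non-expansive estimate is available there; the two regimes are thus properly $0<k\le T_s$ and $k>T_s$.
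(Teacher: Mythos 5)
Your proof is correct, and it is substantially more self-contained than what the paper provides: the paper only sketches the non-expansiveness (via singular values of $P_{k+1}\phi_kB_k^\top B_k\phi_k^\top$, which is not even symmetric, so that sketch is itself a little loose) and defers the exponential decay for $k\ge T_s$ entirely to the cited reference. Your route — collapsing the update to $\tilde\theta_{k+1}=\lambda^2P_{k+1}P_k^{-1}\tilde\theta_k$, telescoping to the exact closed form $\tilde\theta_k=\lambda^{2k}\varepsilon P_k\tilde\theta_0$, and then trading the regularizer term $\lambda^{2k}\varepsilon I_p$ against a PE-induced, $k$-uniform lower bound on $\lambda_{min}(P_k^{-1})$ over the most recent window — is clean, handles both regimes from a single master estimate, and reproduces the paper's constant exactly (your $c_r^2=\tfrac{\varepsilon}{\delta}\sum_{l=0}^{T_s}\lambda^{2l}$ equals the stated $\varepsilon(\lambda^{2T_s}-\lambda^{-2})/(\delta(1-\lambda^{-2}))$, modulo the paper's stray parenthesis). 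What the closed-form approach buys is precisely the point you emphasize: the decay cannot come from a per-step contraction factor (none exists uniformly), but only from the accumulated, exponentially weighted information matrix, and localizing PE to the last $T_s+1$ samples is what makes the bound uniform in $k$. The one residual discrepancy you correctly flag is the boundary index: a full PE window $[k-1-T_s,k-1]$ exists only for $k\ge T_s+1$, whereas the lemma asserts the geometric bound already at $k=T_s$ (where $c_r$ may be less than $1$, so non-expansiveness does not cover it); this is an off-by-one imprecision in the lemma's statement rather than a gap in your argument, and it is harmless for the downstream regret bounds, which only use the tail sums.
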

The non-expansive result follows from \eqref{eq:rls_theta_update} by noting that 
\begin{equation*}
\ak{
\Tilde{\theta}_{k+1} = \left(I_p - P_{k+1}\phi_kB_k^\top B_k \phi_k^\top\right)\Tilde{\theta}_{k}, \quad \forall k>0,}
\end{equation*}
and that $\sigma_{min}\left(P_{k+1}\phi_kB_k^\top B_k \phi_k^\top \right)\geq 0$ and $\sigma_{max}\left(P_{k+1}\phi_kB_k^\top B_k \phi_k^\top \right)<1$. The proof for the exponential stability for $k\geq T_s$ can be found in \cite{bruggemann2021exponential} and is omitted. 

Consider the RLSFF controller $u_k^{\mathrm{RLS}} = \phi_k^\top(x_k)\theta_k$ with $\theta_k$ given by the recursive law~\eqref{eq:rls_P_update}-\eqref{eq:rls_theta_update}.

\begin{theorem}
\label{thm:rls}
   \ak{ Let Assumption \ref{assum:system} hold, and  $B_k\phi_k^\top(x_k)$ be persistently exciting under the RLSFF adaptive controller, then this controller }
    \begin{enumerate}
        \item[(i)] renders the closed-loop system asymptotically stable
        \item[(ii)] achieves finite regret
    \end{enumerate}
    \begin{equation*}    \mathcal{R}_T(\ak{u^{\mathrm{RLS}}})\leq c_wbL_c\|\Tilde{\theta}_0\|\left(\frac{T_s}{1-\rho}\!+\!\frac{c_r\left(\rho^T+1\right)}{(1-\rho)^2(1-\lambda)} \right).
\end{equation*}
\end{theorem}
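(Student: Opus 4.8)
The plan is to mirror the structure of the proof of Theorem \ref{thm:rpl}, replacing the non-expansivity/contraction facts for RPL (Lemma \ref{lem:rpl}) with the corresponding bounds for RLSFF (Lemma \ref{lem:rls}). For part (i), I would substitute the RLSFF closed-loop disturbance $w_k(x_k) = B_k\phi_k^\top\tilde\theta_k$ into the \acrshort{ediss} inequality of Assumption \ref{assum:system}, exactly as in \eqref{eq:x_bounded}, using $\|B_k\phi_k^\top\|\le b$ and the non-expansivity $\|\tilde\theta_k\|\le\|\tilde\theta_0\|$ from Lemma \ref{lem:rls} to conclude the state stays uniformly bounded; then asymptotic stability follows from input-to-state stability of \eqref{eq:suboptimal} together with $\lim_{k\to\infty}\|\tilde\theta_k\|=0$, which Lemma \ref{lem:rls} gives since $c_r\lambda^{k-T_s}\to0$.

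For part (ii), I would split the regret as in Theorem \ref{thm:rpl}: $\mathcal{R}_T(u^{\mathrm{RLS}})\le L_c\sum_{k=0}^{T_s}\|x_k-x_k^\star\| + L_c\sum_{k=T_s+1}^{T-1}\|x_k-x_k^\star\|$, call these II and I. For II (the non-PE transient), I would use $\|x_k-x_k^\star\|\le c_wb\sum_{i=1}^{k}\rho^{k-i}\|\tilde\theta_{i-1}\|\le c_wb\|\tilde\theta_0\|\sum_{i=1}^k\rho^{k-i}$ exactly as in \eqref{eq:x_bound_II}, and sum via the Cauchy product inequality to get $\text{II}\le c_w b L_c\|\tilde\theta_0\|\,T_s/(1-\rho)$ (bounding the inner geometric sum by $1/(1-\rho)$). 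For I, I would write, for $k>T_s$,
\begin{equation*}
\|x_k-x_k^\star\|\le c_wb\sum_{i=1}^{T_s}\rho^{k-i}\|\tilde\theta_{i-1}\| + c_wb\sum_{i=T_s}^{k-1}\rho^{k-i-1}\|\tilde\theta_i\|,
\end{equation*}
bound the first sum by $c_wb\|\tilde\theta_0\|\rho^{k-T_s}/(1-\rho)$ using non-expansivity, and the second by $c_wbc_r\|\tilde\theta_0\|\sum_{i=T_s}^{k-1}\rho^{k-i-1}\lambda^{i-T_s}$ using the exponential decay from Lemma \ref{lem:rls}. Summing over $k$ from $T_s+1$ to $T-1$: the first contribution telescopes geometrically to something $\le c_wb\|\tilde\theta_0\|(\rho^T+1)/(1-\rho)^2$ (up to constants), and the double geometric sum $\sum_k\sum_i\rho^{k-i-1}\lambda^{i-T_s}$ is handled by the Cauchy product inequality, yielding a term $\lesssim c_wbc_r\|\tilde\theta_0\|/((1-\rho)(1-\lambda))$. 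Combining I and II and absorbing the purely geometric piece into the $c_r(\rho^T+1)/((1-\rho)^2(1-\lambda))$ term (which dominates it since $c_r\ge$ a constant and $1-\lambda\le1$) gives the stated bound.

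The main obstacle, as in Theorem \ref{thm:rpl}, is the careful index bookkeeping in splitting the convolution sum $\sum_{i=0}^{k-1}\rho^{k-i-1}\|\tilde\theta_i\|$ at $i=T_s$ — making sure the off-by-one shifts between $\|\tilde\theta_{i-1}\|$ and $\|\tilde\theta_i\|$ and the exponents on $\rho$ and $\lambda$ line up so that the Cauchy product inequality can be applied cleanly — and then verifying that the leftover geometric transient term is genuinely dominated by the claimed $c_r(\rho^T+1)/((1-\rho)^2(1-\lambda))$ expression rather than needing to appear separately. A minor point to check is that the contraction in Lemma \ref{lem:rls} is stated relative to $\|\theta_0-\theta^\star\|$ rather than $\|\theta_{T_s-1}-\theta^\star\|$, so unlike the second (sharper) RPL bound there is no improvement of the uniform constant $b$ to $c_p$ here; the single stated bound is the analogue of the \emph{first} RPL bound only.
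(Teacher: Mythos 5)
Your proposal is correct and follows essentially the same route as the paper: the same E-$\delta$-ISS convolution bound split at $i=T_s$, the same decomposition of the regret into the pre-PE term II and post-PE term I, non-expansivity from Lemma~\ref{lem:rls} for the transient, exponential decay with constant $c_r$ afterwards, and the Cauchy product inequality to collapse the double sums. Your closing observations — that the paper omits part (i) as parallel to Theorem~\ref{thm:rpl}, and that no $c_p$-type sharpening is available because Lemma~\ref{lem:rls} contracts relative to $\|\tilde\theta_0\|$ — are both accurate.
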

\begin{proof} 
Consider the benchmark state $x_k^\star$ evolution in \eqref{eq:optimal} under perfect uncertainty matching and the RLSFF closed-loop state given by \eqref{eq:suboptimal} with $\theta_k$ update in \eqref{eq:rls_theta_update}. Using the \acrshort{ediss} condition on the nominal system  and noting that  $x_k$ and $x_k^*$ are equal in the first time step, for all $k>T_s$
\begin{align*}
    &\|x_k-x_k^\star\| \leq c_wb\sum_{i=0}^{k-1}\rho^{k-i-1}\|\Tilde{\theta}_i\|\\
    &\leq \rho c_wc_{T_s}b\|\Tilde{\theta}_0\|\rho^{k-T_s} +  c_wc_rb\|\Tilde{\theta}_0\|\sum_{i=T_s}^{k-1}\rho^{k-i-1}\lambda^{i-T_s},
\end{align*}
where the second inequality follows from the bound of $B_k\phi_k$ and Lemma \ref{lem:rls},  and we define $c_{T_s}:= \frac{1-\rho^{T_s}}{1-\rho}$. For $0<k\leq T_s$
   \begin{equation*}
    \|x_k-x_k^\star\| \leq c_w b\sum_{i=1}^{k}\rho^{k-i}\|\Tilde{\theta}_{i-1}\|\leq c_wb\|\Tilde{\theta}_0\|\sum_{i=1}^{k}\rho^{k-i}
    \end{equation*}
Similar to the proof of Theorem \ref{thm:rpl}, the states, $x_k$ and $x_k^\star$ can both be shown to be bounded and, hence
\begin{align*}
    &\mathcal{R}_T(\ak{u^{\mathrm{RLS}}})\leq \underbrace{L_c\sum_{k=T_s+1}^{T-1}\|x_k-x_k^\star\|}_{\text{I}} + \underbrace{L_c\sum_{k=0}^{T_s}\|x_k-x_k^\star\|}_{\text{ II}},
\end{align*}
where, as before,  II represents the cost in the initial $T_s$ long PE phase, and  I the cost improvement afterward. Using the Cauchy Product inequality and the inequalities above
\begin{align*}
    \text{ II}&\leq c_wbL_c\|\Tilde{\theta}_0\|\sum_{k=0}^{T_s-1}\sum_{i=0}^k\rho^{k-i}\leq c_wc_{T_s}T_sbL_c\|\Tilde{\theta}_0\|\\
    \text{I}
    &\leq c_wbL_c\|\tilde{\theta}_0\|\left(\rho c_{T_s}\!\sum_{k=1}^{T-T_s-1}\rho^{k}+\ak{c_r}\sum_{k=0}^{T-T_s-2}\sum_{i=0}^{k}\rho^{k-i}\lambda^{i}\right)\\
    &\leq \rho c_wbL_c\|\Tilde{\theta}_0\|\frac{\rho^T + \rho}{(1-\rho)^2}\\
    &\quad + c_wc_rbL_c\|\Tilde{\theta}_0\| \frac{\left(1-\rho^{T-T_s-1}\right)\left(1-\lambda^{T-T_s-1}\right)}{\left(1-\rho\right)\left(1-\lambda\right)}.
\end{align*}
Combining the terms completes the proof. The asymptotic stability proof parallels that of Theorem \ref{thm:rpl} and is  omitted.
\end{proof}

\ak{Unlike RPL, the RLSFF controller requires the stronger PE condition, but does not restrict the features as in Lemma 2. A sufficient condition for Theorem \ref{thm:rls} to hold is that of PE of $B_k\phi_k^\top$. Given that the input and feature matrices are time-varying, this condition can hold also in the limit, as the state converges to the origin. This happens, notably, in the case of tracking of time-varying references. A numerical example showcasing this is presented in the next section.}

\subsection{Analysis of the Bounds}

Both the RPL and the RLSFF adaptive controllers achieve a finite regret containing three separate terms of potential interest. A constant term, an exponentially decaying one with rate $\rho$, and a linear term in the PE time $T_s$. The latter is of most interest, which arises from the non-expansive properties of both estimators and signifies that no improvement in the worst-case can be expected until the time $T_s$ when the PE condition is met. 
Comparing the regret bounds directly,  when $\gamma <\lambda$ and $c_r>1$, i.e. $\varepsilon>  \delta$,  then the upper bound of the RLSFF regret is strictly larger than that of the RPL controller.

\section{Numerical Simulation}
\label{sec:numerical}
In this section, we show that the MRAC problem can be cast into the considered setting and provide a numerical example to showcase the performance of \ak{the} discussed adaptive controllers. Consider the  uncertainty-matched system  
\begin{equation*}
    x_{k+1} = Ax_k + B(u_k - \ak{\psi^\top}(x_k)\theta^\star),
\end{equation*}
where $A \in \mathbb{R}^{n\times n}$, $B \in \mathbb{R}^{n \times m}$,  \ak{$\psi: \mathbb{R}^n \rightarrow \mathbb{R}^{p\times m}$ is a feature matrix}, and $\theta^\star \in \mathbb{R}^p$ is unknown. The MRAC goal is to track the reference dynamics $
    \bar{x}_{k+1} = A_r \bar{x}_k + B_r r_k,$
where $\bar{x}_k \in \mathbb{R}^n$, is the ideal reference vector to be tracked, $r_k \in \mathbb{R}^m$,  is the bounded input command, $B_r\in \mathbb{R}^{n\times m}$ and  $A_r \in \mathbb{R}^{n \times n}$ is Schur stable. As in \cite{dogan2020improving}, we assume $(A,B)$ to be controllable and $B$ to be full column rank, and consider the input  $u_k = -K_1x_k +K_2r_k + \ak{\psi^\top(x_k)}\theta_k$ with matrices $K_1, K_2 \in \mathbb{R}^{m \times n}$  chosen such that $A-BK_1 = A_r$ and $BK_2 = B_r$ hold and $\theta_k$ being the parameter estimate. Denoting $e_k = x_k-\bar{x}_k$, \ak{the error dynamics are represented in the form of \eqref{eq:nonlinear_matched}}
\begin{equation}
    \label{eq:error_dynamics}
    e_{k+1} = A_r e_k + B\phi_k^\top (e_k)\tilde{\theta}_k,
\end{equation}
\ak{where we define $\phi_k^\top(e_k):= \psi^\top(e_k + \bar{x}_k)$}. Since $A_r$ is stable, the \acrshort{ediss} condition is satisfied. \ak{ We take the cost to be $c(e):= \|e\|^2$}, which is locally Lipschitz since the closed-loop states are bounded. \ak{ Note that, with this formulation, the SE/PE condition can hold even when the error dynamics are asymptotically stable, given that the reference ${r}_k$ is such that the feature matrices $\phi_k(e_k)$ are SE/PE. }

We consider the following example system \cite{dogan2020improving}
\begin{equation*}
    A = \begin{bmatrix}
        1.0314& 0.2526\\
        0.2526& 1.0314 
    \end{bmatrix}, \; B = \begin{bmatrix}
        0.0314\\
        0.2526
    \end{bmatrix},
\end{equation*}
with the reference model dynamics
\begin{equation*}
    A_r = \begin{bmatrix}
        -0.9929& 0.2253\\
        -0.0569& 0.8117 
    \end{bmatrix}, \; B_r = \begin{bmatrix}
        0.0314\\
        0.2526
    \end{bmatrix}.
\end{equation*}
The true parameter is $\theta^\star = \begin{bmatrix}
    0.75& 0.50
\end{bmatrix}^\top$ and \ak{$\psi(x_k) = x_k$, with the corresponding $\phi_k(e_k) = e_k + \bar{x}_k$}. Figure \ref{fig:mrac} shows the tracking of the first component of state $x_k$ for the above example under the RPL, RLSFF controllers and the command governor-based MRAC controller proposed in \cite{dogan2020improving} without regret guarantees. 
\begin{figure}
    \centering
\includegraphics[scale=0.17]{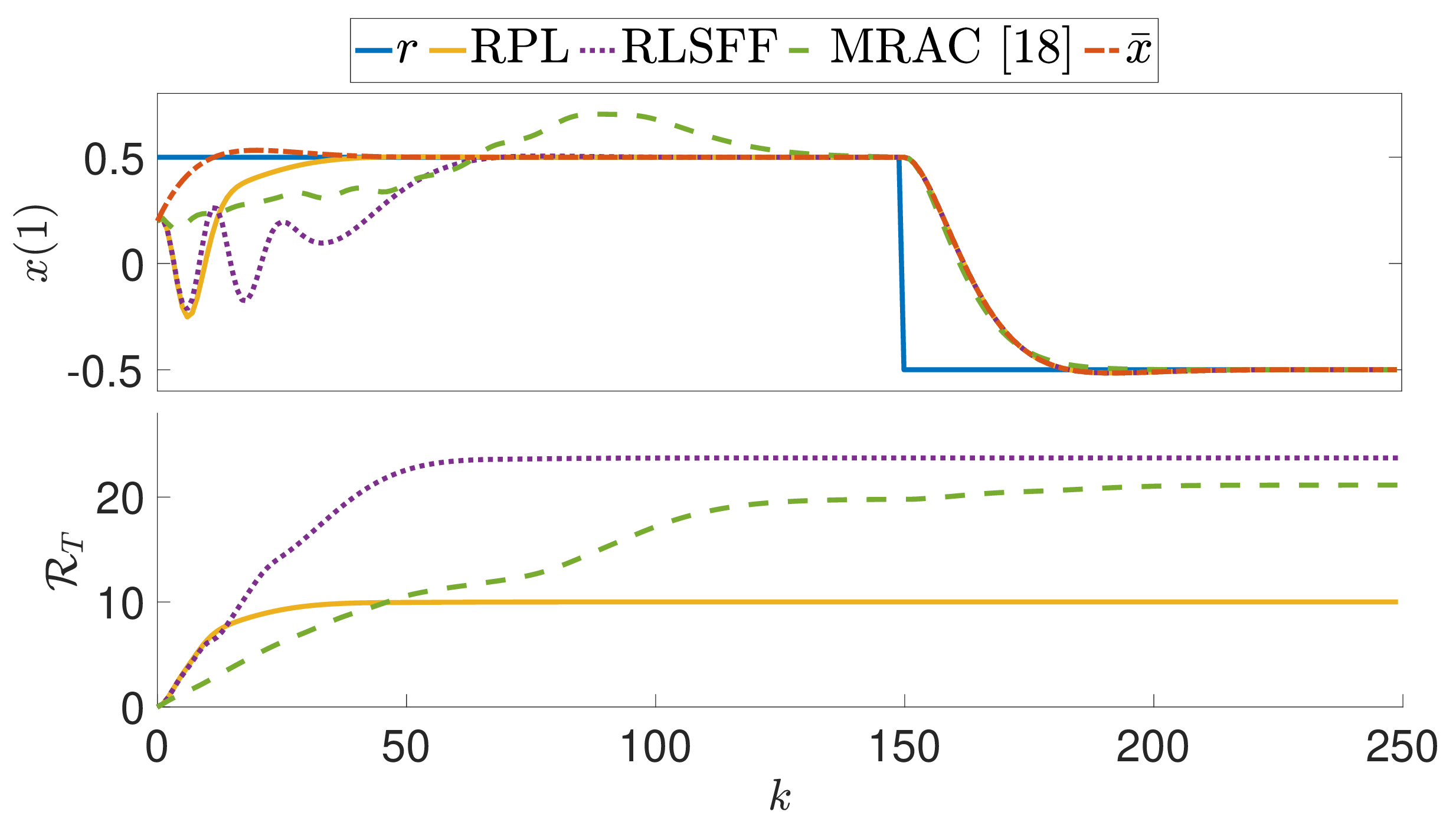}
    \caption{Comparison of adaptive controllers for the MRAC example in terms of asymptotic tracking (on the top) and regret (on the bottom).}
    \label{fig:mrac}
\end{figure}
For all we initialize the estimate by $\theta_0\!=\!\begin{bmatrix}
    5.00& -1.00
\end{bmatrix}^\top$, take $\varepsilon=1$ and start the simulation at $x_0 = \begin{bmatrix}
    0.2& 0.2
\end{bmatrix}^\top$. For RLSFF, we take $\lambda^2=0.99$, as lower values led to conditioning problems. For the controller of \cite{dogan2020improving} we used the same parameters as in \cite[Example 1]{dogan2020improving} and a tuned saturation value of $1.5$ for the command governor update for best results.
The superior performance of RPL is evident from \cref{fig:mrac}, despite a less stringent PE condition. The top figure shows the tracking performance of the controllers, \ak{and} the bottom, the regret for the quadratic costs, which in this case is $\mathcal{R}_T(u): = \sum_{k=0}^{T-1}\|e_k\|^2$, since by definition $\Bar{x}_0={x}_0$ and therefore $e^\star_k=0$ for all $k\in \mathbb{N}$.

\section{Conclusion}

Two recursive learning algorithms are considered for the nonlinear adaptive control problem with matched uncertainty in the context of regret minimization. We show that both the recursive least squares with forgetting factor and the novel recursive proximal learning algorithm are asymptotically stable and achieve finite regret scaling with the time required to achieve persistence of excitation. Possible extensions include the consideration of inexact basis matrices, bounded, non-stochastic noise, and goal-oriented controllers that instead of the estimation cost optimize over the objective cost directly.

\bibliographystyle{IEEEtran}
\bibliography{bibliography.bib}

\end{document}